\DeclareMathOperator{\val}{pty}
\DeclareMathOperator{\bin}{bin}
\DeclareMathOperator{\dom}{dom}
\DeclareMathOperator{\im}{im}
\DeclareMathOperator{\dome}{dom_{\text{even}}}
\newcommand{\ZZ}{\mathbb{Z}}
 \newtheorem{claim}{Claim}
\begin{document}
	
\bibliographystyle{plainurl}

\title{A short proof of correctness of the quasi-polynomial time algorithm for parity games}

\author[1]{Hugo Gimbert}
\affil[1]{LaBRI, Universit{\' e} de Bordeaux, CNRS, France
  \texttt{hugo.gimbert@cnrs.fr}}
\author[2]{Rasmus Ibsen-Jensen}
\affil[2]{IST Austria, Vienna, Austria \texttt{ribsen@ist.ac.at}}

\authorrunning{H. Gimbert, R. Ibsen-Jensen}




\maketitle

\begin{abstract}
Recently Cristian S. Calude, Sanjay Jain, Bakhadyr Khoussainov, Wei Li and Frank Stephan proposed a quasi-polynomial time algorithm for parity games~\cite{calude}. These notes provide a short proof of correctness of their algorithm.
\end{abstract}

\paragraph*{Parity games}
A parity game is given by a directed graph $(V, E)$, a starting node $s \in V$,
 a function  which attaches to each $v \in V$ a priority $\val(v)$ from a set $\{1,2,...,m\}$;
 the main parameter of the game is $n$, the number of nodes, and the second parameter is $m$. Two players Anke and Boris move alternately in the graph with Anke moving first. A move from a node $v$ to another node $w$ is valid if $(v, w)$ is an edge in the graph; furthermore, it is required that from every node one can make at least one valid move. The alternate moves by Anke and Boris define an infinite sequence of nodes which is called a play.  Anke wins a play through nodes $v_0, v_1,\cdots$ iff $\limsup_t \val(v_t)$ is even,
otherwise Boris wins the play. 

  We say that a player \emph{wins the parity game} if she has a strategy which guarantees the play to be winning for her. Parity games are determined~\cite{zielonka:1998} thus either Anke or Boris wins the parity game.
\paragraph*{Statistics}

The core of the algorithm of Calude et al. is to keep track of statistics about the game,
in the form of \emph{partial} functions
\[
f : 0\ldots k \to 1\ldots m\enspace.
\]
The integer $k$ is chosen such that $2^k$
is strictly larger than twice the number of vertices. The domain of $f$ is denoted $\dom(f)$
and its image $\im(f)$. We also let $\dome(f)=\{f(i)\text{ is even}|i\in \im(f)\}$.
Statistics are assumed to be \emph{increasing},
i.e.
 $
\forall i,j \in\dom(f), (i \leq j \implies f(i) \leq f(j))\enspace$.
A statistic $f$ can be modified by \emph{inserting} a priority $c$ at an index $\ell$,
which results in removing all pairs of index $\leq \ell$ from $f$ and adding the pair $(\ell,c)$.

The initial statistic is the empty statistic $f_0=\emptyset$,
which is updated successively by all the priorities visited during the play, thus producing a sequence of statistics.
The update of a statistic $f$ by a priority $c$ is performed by applying successively the following two rules.
\begin{itemize}
\item {\bf Type I update:} If $c$ is even then it is inserted
at the highest index $j\in 0\cdots k$ such that 
$f$ is defined and even on $0 \ldots j-1$.
\item {\bf Type II update:} If $\im(f)$ contains at least one value $<c$ then $c$ is inserted at the highest index $j\in \dom(f)$ such that $f(j) <c$.
\end{itemize}

Applying both rules in succession ensures that the update of an increasing statistic is increasing.
If rule II triggers an insertion then we say the update is a type II update.
Notice that in this case, applying or not rule I in the first place does not change the result.
If rule I triggers an insertion but rule II does not then we say the update is a type I update.

Anke (resp. Boris) \emph{wins the statistics game} if she (resp. he) has a strategy to enforce (resp. to avoid) a visit to a statistic whose domain contains $k$.
Similarly to the game of chess, statistics games are determined: either Anke or Boris has a winning strategy~\cite{zermelo}.

\section{Correctness of the algorithm}

\begin{theorem}[Calude et al]\label{theo:main}
Anke wins the parity game iff she wins the statistics games.
\end{theorem}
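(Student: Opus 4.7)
I would reduce the theorem to two claims about an arbitrary infinite play $v_0, v_1, \ldots$ and its induced sequence of statistics $f_0, f_1, \ldots$: claim~(A), if $\limsup_t \val(v_t)$ is even then $k \in \dom(f_t)$ for some $t$; and claim~(B), if $\limsup_t \val(v_t)$ is odd then $k \notin \dom(f_t)$ for all $t$. With (A) and (B) in hand, the theorem follows by determinacy of both games: a winning Anke parity strategy produces only plays of even limsup, so by (A) it is also statistics-winning, and symmetrically a winning Boris parity strategy is statistics-winning via (B); since each game has a unique winner, the two winners coincide.

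The odd case is handled by a structural invariant. Letting $c$ be the odd limsup and $T$ a time past which only priorities $\leq c$ are visited, and $c$ itself recurs infinitely often, the observation I would prove is that whenever the statistic contains an odd value $\geq c$ at some index $j^\star$, any subsequent update with priority $\leq c$ can only act at indices $\leq j^\star$: a Type~II insertion requires the current value at the insertion index to be strictly below the new priority, which is impossible above $j^\star$; a Type~I insertion requires an all-even prefix, which the odd value $f(j^\star)$ destroys above $j^\star$. Then an induction starting at $t = 0$ (tracking the topmost odd entry $\geq$ the current odd threshold) installs such an anchor from the first visit of the limsup value $c$ after $T$, and propagates forward, capping the maximum defined index strictly below~$k$.

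The even case is quantitative. Letting $c$ be the even limsup, I would prove by induction on the index $j$ the invariant that whenever $f_t(j)$ is defined, there is a contiguous suffix of the play of length at least $2^j$ whose priorities are all $\leq f_t(j)$ and whose maximum equals $f_t(j)$. Given this invariant, Type~I updates act as a binary counter that, fed by enough recurrences of the even priority $c$, must carry all the way up to index $k$ and force $k \in \dom(f_t)$. The main obstacle is propagating this invariant through Type~II updates, which can erase low entries and so require the certifying suffixes to be rebuilt rather than kept verbatim. The hypothesis $2^k > 2n$ enters at the end: a certifying suffix of length $\geq 2^k$ must revisit a vertex of the game graph, yielding a cycle whose dominant priority matches the value at the top of the statistic; this cycle is the combinatorial engine that I expect to close (A) by producing a contradiction if the domain never reached $k$ despite an even limsup.
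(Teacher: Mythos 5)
Your decomposition breaks at claim (B), which is false as a per-play statement. The statistics game is a reachability game: Anke wins as soon as $k$ enters the domain of some $f_t$, and this can happen on a finite prefix irrespective of the limsup of the whole play. Concretely, a play that visits the even priority $2$ for $2^k$ consecutive steps performs $2^k$ Type I updates, each of which increments $\bin(f_t)$ by one (the paper's Lemma~\ref{lem:rule1}), so $\bin$ reaches $2^k$ and hence $k\in\dom(f_t)$; an all-odd tail then makes the limsup odd. Your own sketch betrays the problem: the anchor $j^\star$ is only installed after the time $T$ past which all priorities are $\leq c$ and after the first subsequent visit of $c$, so it controls nothing about what the statistic did before $T$. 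Consequently the implication ``Boris wins the parity game $\Rightarrow$ Boris wins the statistics game'' cannot be obtained play by play, and the right-to-left direction of your argument collapses. (Claim (A), by contrast, is exactly the contrapositive of the paper's Lemma~\ref{lem:boris}, a pure statement about update sequences proved by analysing the maximal pair inserted infinitely often; it needs no reference to the graph and in particular the hypothesis $2^k>2n$ plays no role there.)

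The direction you are missing is precisely where the paper must leave the per-play level. It shows (Lemma~\ref{lem:factor}) that whenever $\bin(f_{N+1})\geq 2^k$ the play so far admits an even factorization of length $\geq 2^k>2n$, so two factor points share a vertex and the player to move, yielding a cycle whose maximal priority is even; this is incompatible with Boris following a \emph{positional} winning strategy, and positional determinacy of parity games closes the argument. You actually possess the right raw material --- your $2^j$-suffix invariant is essentially the even-factorization lemma, and you invoke the $2^k>2n$ pumping --- but you attach it to claim (A), where it is neither needed nor meaningful. Relocating that pumping argument to the other direction, and replacing (B) by the strategy-level statement ``a positional parity-winning strategy of Boris never lets the counter reach $2^k$'', would repair the proof. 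Note also that your sketch of (A) leaves the propagation of the suffix invariant through Type II updates as an acknowledged open obstacle; that is exactly where the bulk of the work in the paper's proof of Lemma~\ref{lem:factor} lies, so even the direction you do address is not yet complete.
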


Since statistics games are determined, the direct implication follows from:

\begin{lemma}\label{lem:boris}
If Boris wins the statistics games, he wins the parity game.
\end{lemma}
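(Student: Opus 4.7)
The plan is to use Boris's statistics-game winning strategy $\tau$ directly in the parity game: the statistics game is played on the same graph with the same moves, so $\tau$ is syntactically a parity-game strategy. Fix any Anke strategy $\sigma$ and let $v_0, v_1, \dots$ be the resulting play, with induced statistics $f_0, f_1, \dots$; since $\tau$ wins the statistics game, $k \notin \dom(f_t)$ for all $t$. I will show $\limsup_t \val(v_t)$ is odd by contradiction: suppose it equals $2c$, and pick $T$ with $\val(v_t) \leq 2c$ for $t \geq T$ and $\val(v_t) = 2c$ infinitely often.

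Call an entry $(i, v)$ of the current statistic \emph{high} if $v > 2c$, and let $i_1$ denote the lowest index of a high entry (set $i_1 = k+1$ if none exists). After $T$ no insertion creates a high entry (inserted values are $\leq 2c$); a Type~II insertion at $j \geq i_1$ is impossible because every defined position at or above $i_1$ has value exceeding $2c$, hence exceeding the inserted priority; a Type~I insertion at $j \geq i_1$ destroys the high entry at $i_1$. So the number of high entries is non-increasing after $T$ and stabilises from some $T' \geq T$; from $T'$ onward the high entries are frozen in place, $i_1$ is constant, and every insertion lies strictly inside $[0, i_1 - 1]$.

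The crux is then the following key lemma: \emph{within $[0, i_1 - 1]$, when priorities $\leq 2c$ arrive with $2c$ infinitely often, the statistic eventually reaches a state in which $f(i) = 2c$ for every $i \in [0, i_1 - 1]$}. Once this state is reached, the next $2c$-insertion cannot fire Type~II (no value in $\im(f)$ is below $2c$) and must fire Type~I at some index $\geq i_1$, which either destroys a high entry---contradicting stability from $T'$---or, when $i_1 = k+1$, places $k$ in $\dom(f)$, contradicting $\tau$ winning the statistics game. The base case $c = 1$ of the lemma is immediate: priority $1$ is a no-op on $f$ (neither rule applies), and successive priority-$2$ insertions act as a binary counter whose bits are positions $0, \dots, i_1 - 1$, reaching the all-ones state within $2^{i_1}$ occurrences.

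The main obstacle is the inductive step of the key lemma. Lower even priorities can wipe out $2c$-progress: a Type~I insertion of some even $c' < 2c$ at position $j$ removes any earlier prefix of $2c$-entries and replaces them with the single pair $(j, c')$. My approach would be to design a well-founded potential function valued in the lexicographic product of binary counters indexed by the even priority levels $2, 4, \dots, 2c$, such that an insertion of priority $2p$ strictly increases the counter at level~$p$ and leaves higher-level counters untouched. The inductive hypothesis at lower priority levels then forces each lower-level counter to eventually saturate, after which only the level-$c$ counter evolves and base-case binary-counter behaviour forces the required $f \equiv 2c$ state on $[0, i_1 - 1]$.
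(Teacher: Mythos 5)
Your setup is sound: playing Boris's statistics-game strategy in the parity game, supposing $\limsup_t \val(v_t)=2c$, and observing that the ``high'' entries (values $>2c$) can only disappear and hence eventually freeze, so that from some point on every insertion is confined to indices below $i_1$ --- all of that is correct, and the intended contradiction (a rule I insertion at index $\geq i_1$) is the right target. But the entire content of the lemma is concentrated in your ``key lemma'', which is exactly the part you leave unproved, and the strategy you sketch for it rests on a false invariant. It is not true that an insertion of priority $2p$ leaves the higher-level counters untouched: a rule I insertion of $2p$ at index $j$ deletes every entry at index $<j$, and those entries are by definition the ones that are defined and even, so they may carry values $2q$ with $q>p$. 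Concretely, if the statistic restricted to $[0,i_1-1]$ is $\{(0,2c)\}$ and priority $2$ arrives, rule I fires at index $1$ and the result is $\{(1,2)\}$: a level-$c$ entry has been destroyed by a level-$1$ insertion, so your lexicographic potential (higher levels more significant) strictly decreases. A secondary problem is that the target state ``$f(i)=2c$ for all $i<i_1$'' is stronger than what the dynamics deliver: the overflow already occurs as soon as $[0,i_1-1]$ is everywhere defined and \emph{even} (e.g.\ from $\{(0,2),(1,2c)\}$), and the all-$2c$ state need never be reached. So there is a genuine gap at the core of the argument, and the proposed route for closing it does not work as described.

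For comparison, the paper closes this gap without any potential function, by a short combinatorial argument: take the maximal pair $(\ell,d)$ (ordered by index first, then value) inserted infinitely often. One checks that $d=\limsup_t c_t$; since $(\ell,d)$ is inserted infinitely often it is also removed infinitely often, and any pair $(\ell',c')$ removing it infinitely often must satisfy $\ell'=\ell$ (maximality of $\ell$) and $c'<d$ (maximality of $d$), hence must come from a rule I insertion at index exactly $\ell$ --- which forces $d$ to be odd, since a defined even value at $\ell$ would push rule I to a strictly higher index. To repair your proof you would essentially need to control the interference between priority levels by an argument of this kind (or via the monotonicity of the counter value $\bin$, as in Lemmas~\ref{lem:rule1} and~\ref{lem:rule2}), rather than by the level-wise counters as sketched.
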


\begin{proof}
Every play won by Boris in the statistics game is won by Boris in the parity game
because
$c=\limsup_t c_t$ is odd
in
every sequence
of statistics updates
$f_0\to_{c_0} f_1 \to_{c_1} \ldots $
such that $\forall t \geq 0, k\not \in\dom(f_t)$,
the proof of which follows.

An easy case is when the sequence of statistics
is ultimately constant equal to some $f$
then $f\to_c f$
thus $c$ is odd because an update by an even priority
always performs an insertion.
In the opposite case define
$(\ell,d)$
the maximal pair (for the dichotomic order)
inserted infinitely often.
Since $d$ is inserted infinitely often then $d \leq \limsup_t c_t = c$.
And $d\geq c$  
otherwise $c$ would be inserted infinitely often at an  index $\geq \ell$ which would contradict the maximality of $(\ell,d)$.
Since $(\ell,c)$ is inserted infinitely often 
then it is removed infinitely often,
let $(\ell',c')$ be a pair used infinitely often to remove $(\ell,c)$.
Since $(\ell,c)$ is removed by the insertion of  $(\ell',c')$ then $\ell \leq \ell'$
hence $\ell=\ell'$ by maximality of $\ell$.
Since $(\ell',c')=(\ell,c')\neq (\ell,c)$
then $c' < c$ by maximality of $c$.
Since $(\ell,c)$ is removed by the insertion of  
$(\ell',c')$ and $c' < c$ then this insertion is necessarily performed by a type I update
and since $\ell=\ell'$ then $c$ is odd
otherwise the insertion would occur at a higher index.
\end{proof}

The converse implication (Corollary~\ref{cor:direct})
relies on several crucial  properties of  statistics.

\begin{definition}[Counter value]
With every statistic $f$ is associated its \emph{counter value}
\[
\bin(f) =
\sum_{j \in\dome(f)} 2^j
\enspace.
\]
\end{definition}

In the sequel we fix a sequence  $f_0\to_{c_0} f_1  \ldots \to_{c_N} f_{N+1}$ of statistics updates.
We will first give two lemmas that gives information on what can be said when an update of type 1 and 2, respectively, is used on a date.

\begin{lemma}\label{lem:rule1}
For all $N$ if $f_{N}\rightarrow_{c_N} f_{N+1}$ is a type 1 update, then $\bin(f_N)+1=\bin(f_{N+1})$
\end{lemma}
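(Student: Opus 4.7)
The approach is a direct computation: I pin down which indices gain or lose membership in $\dome(\cdot)$ during a type I update, then read off the change in $\bin$.

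First, I unpack the update. Type I means that rule I triggers (so $c_N$ is even) while rule II does not; thus $c_N$ is inserted at the largest index $j \in \{0,\ldots,k\}$ for which $f_N$ is defined and even on $\{0,\ldots,j-1\}$. Maximality of $j$ yields two immediate facts that drive the whole argument: every index in $\{0,1,\ldots,j-1\}$ lies in $\dome(f_N)$, and $j$ itself does not (otherwise $j+1$ would also satisfy the same condition, contradicting the choice of $j$).

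Second, I describe $\dome(f_{N+1})$ in terms of $\dome(f_N)$. By the insertion rule, $f_{N+1}$ is obtained from $f_N$ by deleting all pairs at indices $\leq j$ and putting the even value $c_N$ at index $j$, while agreeing with $f_N$ on indices $> j$. Combined with the first step this gives $\dome(f_N) = \{0,\ldots,j-1\} \cup A$ and $\dome(f_{N+1}) = \{j\} \cup A$, where $A = \{i \in \dome(f_N) : i > j\}$. Substituting into the definition of $\bin$ collapses the difference to $2^j - (2^0 + 2^1 + \cdots + 2^{j-1}) = 1$, which is the claimed identity.

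I do not anticipate a serious obstacle; the argument is essentially bookkeeping. The one point that deserves attention is making sure rule II really leaves the post-rule-I statistic untouched, but this is precisely built into the type I hypothesis (rule II does not trigger), so no extra verification is needed.
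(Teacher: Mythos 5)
Your proof is correct and follows essentially the same route as the paper's: identify the insertion index $j$, observe that maximality forces $\{0,\ldots,j-1\}\subseteq\dome(f_N)$ while $j\notin\dome(f_N)$, note that indices above $j$ are untouched and $j$ enters $\dome(f_{N+1})$ with the even value $c_N$, and compute the difference $2^j-(2^0+\cdots+2^{j-1})=1$. No gaps.
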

\begin{proof}
Let $\ell$ be the entry of the insertion in update $f_{N}\rightarrow_{c_N} f_{N+1}$.
Then, $f_{N}(\ell)$ cannot be defined and even, because otherwise $\ell+1$ could also be chosen. Also, $f_{N+1}(\ell)$ is $c_N$ and thus even. On the other hand, since $f_{N}\rightarrow_{c_N} f_{N+1}$ is a type 1 update, we have for $j<\ell$ that $f_{N}(j)$ is defined and even  and $f_{N+1}(j)$ is not defined. No index $>\ell$ changes on an insertion on index $\ell$.
Hence, $\bin(f_{N+1})-\bin(f_{N+1})=2^{\ell}-(1+2+4+\dots+2^{\ell-1})=1$.
\end{proof}

Next, the lemma about type 2 updates.
\begin{lemma}\label{lem:rule2}
For all $N$ if there is an update of type 2 on $(\ell,c)$ at date $N$, then there is a $t<N$ such that 
\begin{enumerate}
\item the update on date $t$ is of type 1
\item the statistics $f_{t+1}$ is equal to $f_{N+1}$ except that $f_{t+1}(\ell)\neq f_{N+1}(\ell)$ and $f_{t+1}(\ell)$ is even
(implying that $\bin(f_{t+1})= \bin(f_{N+1})$ if $c$ is even and $\bin(f_{t+1})> \bin(f_{N+1})$ if $c$ is odd)
\item there are no insertions at index $\ell'$ for any $\ell'>\ell$ between date $t$ and $N$.
\end{enumerate}
Also, $t<N$ is the last date such that there is a type 1 update at that date on index $\ell$.
\end{lemma}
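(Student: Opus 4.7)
The plan is to let $t$ be the last date strictly before $N$ on which a type 1 update inserts at index $\ell$, and then verify the three conditions via a backward trace of insertions that touch index $\ell$.

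First I will argue existence. Any insertion at index $\ell' > \ell$ deletes $f(\ell)$, and insertions at indices $< \ell$ leave $f(\ell)$ untouched, so the latest date $s < N$ carrying an insertion at some index $\geq \ell$ must have its insertion at $\ell$ exactly; otherwise $f_N(\ell)$ would be undefined, contradicting the assumption that the update at $N$ is a type 2 update on $(\ell,c)$, which requires $\ell \in \dom(f_N)$. If this insertion at $\ell$ is itself of type 2, the same reasoning applies one step further back, yielding a chain $t_0 > t_1 > \cdots$ of insertions all at index $\ell$, with no insertion at any higher index between consecutive terms. Since a type 2 insertion at $\ell$ requires $\ell \in \dom(f)$, the very first insertion at $\ell$ in the history cannot be of type 2, so the chain must terminate at some type 1 insertion; call this date $t$. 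By construction every insertion at index $\ell$ strictly between $t$ and $N$, as well as the one at $N$, is of type 2, so $t$ is indeed the last type 1 update at index $\ell$ before $N$, and condition 3 holds since no insertion at any index $\ell' > \ell$ has occurred between $t$ and $N$.

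For condition 2, I compare $f_{t+1}$ and $f_{N+1}$ index by index. For $j > \ell$, condition 3 tells us no insertion between dates $t+1$ and $N$ affects those entries, so $f_{t+1}(j) = f_{N+1}(j)$. For $j < \ell$, both are undefined, since each follows an insertion at index exactly $\ell$, which removes all pairs of smaller index. At $j = \ell$, $f_{t+1}(\ell) = c_t$ is even because $t$ is a type 1 update. Furthermore, the intermediate chain of type 2 updates at $\ell$ strictly increases $f(\ell)$, so $f_N(\ell) \geq c_t$; combined with the type 2 condition $f_N(\ell) < c$ at date $N$, this gives $c_t < c$ and in particular $f_{t+1}(\ell) \neq f_{N+1}(\ell)$. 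The parenthetical on $\bin$ follows immediately: $f_{t+1}$ and $f_{N+1}$ agree off index $\ell$, and at $\ell$ the term $2^\ell$ is contributed to $\bin(f_{t+1})$ since $c_t$ is even; it is also contributed to $\bin(f_{N+1})$ if $c$ is even (giving equality) but not if $c$ is odd (giving $\bin(f_{t+1}) = \bin(f_{N+1}) + 2^\ell$).

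The main obstacle is the backward-trace existence argument: one must ensure that the chain of type 2 insertions at $\ell$ is neither interrupted by an insertion at a higher index (which would erase $f(\ell)$) nor runs off the start of the history without being preceded by a type 1 update. Both points reduce to the single observation that a type 2 insertion at $\ell$ forces $\ell \in \dom(f)$ at that moment, which in turn forces the preceding insertion at index $\geq \ell$ to be at $\ell$ exactly.
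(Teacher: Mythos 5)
Your proof is correct and follows essentially the same route as the paper's: both take $t$ to be the last type~1 insertion at index $\ell$ before $N$, observe that any intervening insertion at an index $>\ell$ would erase $f(\ell)$ and force a later type~1 re-insertion at $\ell$ (impossible by choice of $t$), and then compare $f_{t+1}$ with $f_{N+1}$ index by index. Your backward-chain organization of the existence argument is a cosmetic variant of the paper's forward contradiction, and your explicit derivation of $f_{t+1}(\ell) < f_{N+1}(\ell)$ via the increasing type~2 updates matches the paper's remark.
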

\begin{proof}
Let $t<N$ be the largest date such that there is a type 1 update at that date on index $\ell$.
This is well-defined, since initially, $f_0=\emptyset$ and for the smallest $t'$, such that $\ell\in f_{t'+1}$, we must have that the update at date $t'$ is a type 1 update (by definition).
We see that both $f_{N+1}(i)$ and $f_{t+1}(i)$ are undefined for $i<\ell$ because of the updates on date $N$ and $t$ respectively.
 Also, for all $i>\ell$ such that $f_{t+1}(i)$ or $f_{N+1}(i)$ is defined, we have that both are defined and $f_{t+1}(i)=f_{N+1}(i)$. This is because, if an insertion $(\ell',d)$ is performed for $\ell'>\ell$ on a date $t'$ between $t+1$ and $N$, we have that $f_{t'+1}(\ell)$ becomes undefined and hence, there must be a date $>t'$ such that $\ell$ is inserted again, which would use rule 1 and thus contradict the choice of $t$.
Thus, the statistics match except that $f_{t+1}(\ell)<f_{N+1}(\ell)$ (because each later time we change index $\ell$ we use rule 2 and the entry thus increases). Also, $f_{t+1}(\ell)$ is even because the update on date $t$ is of type 1.
\end{proof}
We also give a corollary.
\begin{corollary}\label{cor:firstrule1}
Fix a number $i>0$.
Consider the smallest date $T$ such that $\bin(f_{T+1})\geq i$. Then the update on date $T$ is of type 1 and $\bin(f_{T+1})= i$
\end{corollary}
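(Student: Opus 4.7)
The plan is to observe that $\bin$ moves in a very constrained way from one date to the next, and read the corollary off that. First I would enumerate the possible transitions of $\bin(f_N) \to \bin(f_{N+1})$: a type~1 update increases $\bin$ by exactly $1$ by Lemma~\ref{lem:rule1}; a type~2 update at some index $\ell$ with even $c$ preserves $\bin$ by item~2 of Lemma~\ref{lem:rule2}; a type~2 update with odd $c$ strictly decreases $\bin$ by the same item; and if no insertion is performed (neither rule triggers, which happens when $c$ is odd and strictly smaller than every value of $\im(f_N)$), $\bin$ is unchanged. The crucial consequence is that $\bin$ can strictly increase \emph{only} through a type~1 update, and in that case it increases by exactly $1$.

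Next I would use the minimality of $T$. Since $T$ is the smallest date with $\bin(f_{T+1}) \geq i$, we have $\bin(f_T) < i$ (taking $\bin(f_0) = 0 < i$ in the edge case $T=0$). The step $f_T \to_{c_T} f_{T+1}$ therefore strictly increases $\bin$, so by the enumeration above it must be a type~1 update, and $\bin(f_{T+1}) = \bin(f_T) + 1$. Combining $\bin(f_T) \leq i-1$ (integrality) with $\bin(f_{T+1}) \geq i$ forces $\bin(f_{T+1}) = i$, which finishes the proof.

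The only place requiring a small amount of care is the case analysis for type~2 updates, since Lemma~\ref{lem:rule2} is stated in terms of the matching earlier type~1 date $t$ rather than directly comparing $f_N$ to $f_{N+1}$. But the $\bin$ identity stated parenthetically in item~2 of that lemma (equality if $c$ is even, strict inequality $\bin(f_{t+1}) > \bin(f_{N+1})$ if $c$ is odd) together with item~3, which guarantees that the positions $>\ell$ are unchanged between $t+1$ and $N+1$, immediately yields $\bin(f_N) = \bin(f_{t+1})$ and hence the desired monotonicity statement for the single step $N \to N+1$. I do not anticipate any other obstacles; the argument is essentially a one-line consequence of the two lemmas once the transition table for $\bin$ is recorded.
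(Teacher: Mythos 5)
Your skeleton matches the paper's: minimality gives $\bin(f_T)<i$, and once the update at date $T$ is known to be of type 1, Lemma~\ref{lem:rule1} yields $\bin(f_{T+1})=\bin(f_T)+1\leq i$ and hence equality. The gap is in how you rule out type 2: your ``transition table'' entry asserting that a type 2 update with even $c$ satisfies $\bin(f_{N+1})=\bin(f_N)$ is false, and so is the conclusion that $\bin$ can strictly increase only through a type 1 update. Item~3 of Lemma~\ref{lem:rule2} controls only the indices $>\ell$, so it does not give $\bin(f_N)=\bin(f_{t+1})$: between dates $t+1$ and $N$ the play may insert even priorities at indices $<\ell$ and replace the even value at $\ell$ by an odd one still below $c$, which changes the contribution of the indices $\leq\ell$ to $\bin$ arbitrarily within $\{0,\dots,2^{\ell+1}-1\}$, whereas $f_{t+1}$ contributes exactly $2^\ell$ there. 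Concretely, with $k\geq 2$ the priority sequence $2,2,2,2,3,2,2,3,2,4$ produces statistics whose $\bin$ values are $0,1,2,3,4,0,1,2,0,1,4$; the final step is $f_9=\{(0,2),(1,3),(2,3)\}\to_4 f_{10}=\{(2,4)\}$, a type 2 update at $\ell=2$ with even $c=4$ in which $\bin$ jumps from $1$ to $4$ (here $t=3$, $f_4=\{(2,2)\}$, and indeed $\bin(f_4)=4=\bin(f_{10})$ while $\bin(f_9)=1$).

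The corollary survives because the first date at which $\bin$ reaches $4$ in this run is still a type 1 step, and that is exactly what Lemma~\ref{lem:rule2} delivers if you aim it at the right target: if the update at date $T$ were of type 2 on $(\ell,c)$, item~2 produces an \emph{earlier} date $t<T$ with $\bin(f_{t+1})=\bin(f_{T+1})$ when $c$ is even and $\bin(f_{t+1})>\bin(f_{T+1})$ when $c$ is odd, so in either case $\bin(f_{t+1})\geq i$ already at date $t<T$, contradicting the minimality of $T$ (and if neither rule triggers at date $T$ then $f_{T+1}=f_T$, again a contradiction). In other words, minimality must be played against the earlier date supplied by Lemma~\ref{lem:rule2}, not against the one-step predecessor $f_T$; with that replacement your argument becomes the paper's proof.
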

\begin{proof}
By minimality of $T$ we get that $\bin(f_{T})< i$ (because $\bin(f_{0})=0$). 
By Lemma~\ref{lem:rule2}, the update on date $T$ has type 1. By Lemma~\ref{lem:rule1}, we thus get that $\bin(f_{T+1})=i$.
\end{proof}

Next, we define even factorization and then show that a long even factorization implies that Anke wins the parity game.

\begin{definition}[Even factorizations]
An \emph{ even factorization} of length $j$ is a sequence $0 \leq t_0<  \ldots < t_j$ such that for every $i\in 0\ldots j-1$,
the maximum of 
$c_{t_i},c_{t_i+1}, \ldots ,c_{t_{i+1}-1}$
is even.
\end{definition}

We next show that long even sequences exists.
\begin{lemma}\label{lem:factor}
For all $N$, there is an even factorization of length at least $\bin(f_{N})$.
\end{lemma}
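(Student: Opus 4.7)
The plan is to proceed by strong induction on $N$. The base $N=0$ is immediate: $\bin(f_0)=0$, and the singleton $t_0=0$ is a factorization of length $0$.

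For the inductive step I split on the update at date $N-1$. If this update is of type 2, Lemma~\ref{lem:rule2} locates an earlier date $t+1 \leq N-1$ with $\bin(f_{t+1}) \geq \bin(f_N)$---equal when $c_{N-1}$ is even, strict when $c_{N-1}$ is odd. The induction hypothesis applied to $t+1$ yields a factorization of length at least $\bin(f_{t+1})$, which is truncated down to length $\bin(f_N)$ in the strict case. If the update at $N-1$ is of type 1, I apply Corollary~\ref{cor:firstrule1} with $i = \bin(f_N)$ to find the smallest date $T$ with $\bin(f_{T+1}) \geq \bin(f_N)$; the corollary guarantees that this update is of type 1 and attains $\bin(f_{T+1}) = \bin(f_N)$. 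Whenever $T+1 < N$, the induction hypothesis at $T+1$ supplies a factorization of the desired length.

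The remaining sub-case---and the main obstacle---is $T=N-1$: the type 1 update at $N-1$ is itself the first to bring $\bin$ up to $\bin(f_N)=j$. Here the induction hypothesis at $f_{N-1}$ only gives a factorization of length $j-1$, and I must extend by appending the index $N$; the critical task is to verify that the newly-formed final interval has an even maximum. My plan is to strengthen the induction hypothesis so that the length-$\bin(f_M)$ factorization for $f_M$ ends exactly at $M$. Under this stronger statement, the extension in the hard sub-case appends the singleton interval $[N-1,N-1]$, whose sole priority $c_{N-1}$ is even because the update at $N-1$ is of type 1.

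Maintaining the stronger hypothesis in the type 2 case is delicate and is where the structural content really enters: one must show that shifting the last index of the factorization for $f_{t+1}$ from $t+1$ to $N$ does not destroy the evenness of the final interval's maximum. This is where the `no insertions above index $\ell$' clause from the proof of Lemma~\ref{lem:rule2}, together with the monotonicity of $f$ on coordinates above $\ell$, is used to argue that all priorities on $(t, N-1]$ are strictly dominated by the priority $c_{N-1}$ inserted at $N-1$. The sub-case where $c_{N-1}$ is odd (so $\bin$ strictly decreases) is then handled by combining the shift argument with a truncation, using the slack between $\bin(f_{t+1})$ and $\bin(f_N)$.
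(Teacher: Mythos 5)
Your plan is a genuine departure from the paper: the paper does not induct on $N$ at all, but defines the factorization globally ($t_j$ is the last type-1 date $T$ with $\bin(f_{T+1})=j$) and proves each segment has an even maximum by a contradiction argument about the first recurrence of a high counter value. Your induction with the strengthened hypothesis ``the factorization of length $\bin(f_M)$ ends at $M$'' handles the type-1 case cleanly (indeed more simply than you describe: Lemma~\ref{lem:rule1} plus the strengthened hypothesis at $N-1$ lets you append the singleton segment directly, with no need for Corollary~\ref{cor:firstrule1}), and the type-2 case with $c_{N-1}$ even also closes: the segments of the factorization for $f_{t+1}$ below index $\bin(f_N)$ are kept, everything from $s_{\bin(f_N)-1}$ to $N-1$ is merged into one segment, and its maximum is $\max(e,c_{N-1})$ with $e$ even (a max of even segment-maxima) and $c_{N-1}$ even, hence even.

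The gap is the type-2 case with $c_{N-1}=c$ odd, which is exactly where the ``slack'' $\bin(f_{t+1})=\bin(f_N)+2^{\ell}$ appears. Truncating the inherited factorization to length $y=\bin(f_N)$ and moving its last point to $N$ produces a final segment $[s_{y-1},N-1]$ whose maximum is $\max(e,c)$, where $e$ is the (even) maximum over $[s_{y-1},t]$ and $c$ is odd and is actually attained at date $N-1$. The structural fact you invoke --- all priorities on $(t,N-1]$ are dominated by $c$ --- works \emph{against} you here: it shows the tail's maximum is precisely the odd $c$, so the merged segment has even maximum only if $e>c$, and nothing in your strengthened hypothesis guarantees this. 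What one actually knows is that $f_N(i)\geq c$ for the defined indices $i>\ell$ (by monotonicity of $f_N$), so some even priority $>c$ occurred when those entries were created; but your invariant does not place those occurrences inside $[s_{y-1},t]$. Closing this case requires a substantially stronger induction hypothesis that ties each stored value $f(i)$ to the maximum priority since the completion of a nested block of $2^i$ even segments (essentially the original Calude et al.\ invariant), or else the paper's non-inductive route: there the oddness of the segment maximum is refuted by showing that the first date $T'$ at which $\bin$ climbs back to $i$ after the odd drop can be neither a type-1 update (it would postdate $t_i$, contradicting that $t_i$ is the \emph{last} type-1 date with counter value $i$) nor a type-2 update (Lemma~\ref{lem:rule2} would push the matching type-1 date past $T$, contradicting the minimality of $T'$). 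As written, your proposal does not contain the idea needed to finish this sub-case.
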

\begin{corollary}\label{cor:direct}
If Anke wins the statistics game then she wins the parity game.
\end{corollary}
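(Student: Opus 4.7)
The plan is to prove the contrapositive via positional determinacy of parity games. Suppose Anke wins the statistics game; I will deduce that Boris cannot win the parity game and then invoke determinacy to conclude that Anke wins it. Assume for contradiction that Boris does win the parity game; then by Zielonka's positional determinacy~\cite{zielonka:1998} he has a positional winning strategy $\tau$. Fix Anke's winning statistics strategy $\sigma$ and consider the play $v_0, v_1, \ldots$ produced by $\sigma$ against $\tau$; by hypothesis there is a least date $T$ with $k \in \dom(f_{T+1})$.

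The next step is a short arithmetic observation. Because $k \notin \dom(f_T)$ by minimality of $T$, the insertion at date $T$ cannot be of type II (rule II only inserts at indices already in $\dom(f_T)$), so it is of type I with $c_T$ even. The prerequisites of rule I then force $f_T$ to be defined and even on $0, \ldots, k-1$, and Lemma~\ref{lem:rule1} (equivalently Corollary~\ref{cor:firstrule1} applied to $i = 2^k$) yields $\bin(f_{T+1}) = 2^k$. Lemma~\ref{lem:factor} now produces an even factorization $0 \le t_0 < \cdots < t_j$ with $j \ge 2^k > 2n$. Since the $j+1$ vertices $v_{t_0}, \ldots, v_{t_j}$ lie in a set of only $n$ vertices and more than $n$ of them share the majority time parity, pigeonhole yields indices $i < i'$ with $t_i \equiv t_{i'} \pmod 2$ and $v_{t_i} = v_{t_{i'}}$; the block $[t_i, t_{i'})$ then has even length and its priorities have even maximum (the maximum of the even per-block maxima the factorization supplies).

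Finally I exploit positionality to loop the cycle. Define a modified Anke strategy $\sigma'$ that plays exactly like $\sigma$ up to date $t_{i'}$ and thereafter, starting from the vertex $v_{t_{i'}} = v_{t_i}$, replays Anke's moves of the block $[t_i, t_{i'})$, iterating this every time the cycle closes; the even length of the block ensures Anke has the move precisely when the plan prescribes her to move. Because $\tau$ is positional, Boris repeats his original replies vertex-by-vertex, so the play produced by $\sigma'$ against $\tau$ is ultimately periodic with period the cycle $v_{t_i}, \ldots, v_{t_{i'}-1}$. Its $\limsup$ equals the even maximum of those priorities, contradicting that $\tau$ was winning for Boris, and determinacy of parity games then hands Anke a winning strategy from $s$. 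The main obstacle is this cycle-replay step: positional determinacy is essential here, because against a history-dependent Boris one could not guarantee that re-entering a vertex elicits the same response; the rest of the argument is the arithmetic of the first $k$-insertion plus pigeonhole on the factorization.
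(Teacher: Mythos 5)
Your proposal is correct and follows essentially the same route as the paper: the first insertion at index $k$ must be a type I update, giving $\bin(f_{T+1})\ge 2^k$, then Lemma~\ref{lem:factor} plus pigeonhole on the pair (vertex, current player) yields a loop with even maximal priority, which rules out a positional winning strategy for Boris, and positional determinacy finishes the argument. The only difference is that you spell out explicitly the cycle-replay construction that the paper compresses into a parenthetical remark.
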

\begin{proof}
By definition of the statistics game,
Anke can enforce the play to reach a statistic $f_{N+1}$ such that $k\in\dom(f_{N+1})$.

If $N$ is chosen minimal then $f_N\to_{c_N} f_{N+1}$ is an update of type 1 by Corollary~\ref{cor:firstrule1} on entry $k$. Hence, $f_{N+1}$ is defined on $k$ and $f_{N+1}(k)$ is even. This implies that $\bin(f_{N+1})\geq 2^k$.
According to Lemma~\ref{lem:factor},
such a play has an even  factorization $t_0 < t_1 <\ldots < t_j$
of length $\bin(f_{N+1})\geq 2^k$. 
Since $2^k$ is $>$ than twice the number of vertices, 
the play loops on the same vertex at some dates $t_i$ and $t_{i'}$, while having the same current player, with $0\leq i< i' \leq j$.
By definition of even factorizations, the maximal priority on this loop is even. Thus Boris has no positional winning strategy in the parity game (because if he had followed it, no loop can have even maximal priority),
and since parity games are positional~\cite{zielonka:1998},
Boris has no winning strategy at all in the parity game.
\end{proof}

\begin{proof}[Proof of Lemma~\ref{lem:factor}]
Consider a fixed $N$. Let $x=\bin(f_{N})$. 
We will show that the following sequence $t_1,\dots,t_x$ is an even factorization.

For ease of notation, let $t_{x+1}=N+1$ (note that $t_{x+1}$ is not part of the even factorization).
For all $j\leq x$, let $t_j<t_{j+1}$ be the last date $T$ using rule 1 such that $\bin(f_{T+1})=j$. 

\smallskip\noindent{\bf Sequence is well-defined.}
This sequence is well-defined because (1)~on the first date $T$ where $\bin(f_{T+1})\geq j$ we use rule 1 and $\bin(f_{T+1})= j$, by Corollary~\ref{cor:firstrule1}; and (2)~$\bin(f_{t_{j+1}})=j$ (and hence a date $T<t_{j+1}$ exists where $\bin(f_{T+1})\geq j$), which is true for $j=x$ by definition of $t_{x+1}$ and otherwise follows from Lemma~\ref{lem:rule1} because we use rule 1 on date $t_{j+1}$ for $j<x$.

\smallskip\noindent{\bf Sequence is an even factorization.}
Consider some fixed $i<x$. We will argue that the maximum priority $c$ in $c_{t_i},c_{t_i+1}, \ldots ,c_{t_{i+1}-1}$ is even.
We will do so using contradiction, so assume that $c$ is odd.
Let $T\in \{t_i,t_i+1,\dots,t_{i+1}-1\}$ be the smallest date such that $c$ is seen on that date and let $\ell$ be the index changed on that date. Note that $T> t_i$, since we use rule 1 on date $t_i$ which requires an even number.

\begin{claim}\label{cla:ell}
The number $\ell$ is well-defined and $\bin(f_{T+1})<i=\bin(f_{t_i+1})$.
\end{claim}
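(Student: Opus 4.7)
Let $\ell_i$ denote the index used by the rule 1 insertion at date $t_i$, so that $f_{t_i+1}(\ell_i)=c_{t_i}$ is even and $c_{t_i}<c$ (since $c_{t_i}\leq c$ by maximality of $c$, with equality ruled out by $c_{t_i}$ even and $c$ odd). The plan is to derive the claim from two auxiliary facts: (i)~at every date $s\in\{t_i+1,\ldots,T\}$, $f_s$ has an entry of value $<c$ at some index $\geq\ell_i$, which will force rule 2 to fire at date $T$ and yield $\ell\geq\ell_i$; and (ii)~every entry of $f_T$ at an index $j>\ell$ is inherited unchanged from $f_{t_i+1}$, i.e.\ $f_T(j)=f_{t_i+1}(j)$.

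Fact (i) I would prove by induction on $s$, with base case $s=t_i+1$ witnessed by $f_{t_i+1}(\ell_i)=c_{t_i}<c$. Every priority $c_s$ with $s\in\{t_i,\ldots,T-1\}$ satisfies $c_s<c$ (by the choice of $c$ as maximum and of $T$ as the smallest date in the interval with $c_T=c$), so every insertion in that window has value $<c$; if such an insertion destroys the inherited witness at some index $j_s\geq\ell_i$, its own insertion index is necessarily $\geq j_s\geq\ell_i$ and carries the new value $c_s<c$, providing a new witness. Instantiating at $s=T$ yields an index $\geq\ell_i$ in $\dom(f_T)$ with value $<c$, so rule 2 fires at $T$, and since $\ell$ is the largest index in $\dom(f_T)$ with $f_T(\ell)<c$, one has $\ell\geq\ell_i$.

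For Fact (ii), if $j>\ell$ lies in $\dom(f_T)$ then $f_T(j)\geq c$ by maximality of $\ell$. Since no update in $[t_i+1,T-1]$ inserts a value $\geq c$, no such update can have occurred at an index $\geq j$ either (otherwise the entry at $j$ would have been destroyed and could only be re-inserted with a value $<c$); and the insertion at date $t_i$ itself happens at $\ell_i\leq\ell<j$, leaving $j$ untouched. Hence $f_T(j)=f_{t_i+1}(j)$. Combining (ii) with the description of the rule 2 update at $T$ (insert the odd value $c$ at $\ell$, preserve entries at indices $>\ell$), one obtains $\dome(f_{T+1})\subseteq\dome(f_{t_i+1})\setminus\{\ell_i\}$, whence $\bin(f_{T+1})\leq i-2^{\ell_i}<i$. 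The main obstacle is the persistence statement (ii): the proof must combine the fact that every high-valued entry of $f_T$ above $\ell$ is inherited with the same value from $f_{t_i+1}$ with the fact that its index is strictly greater than $\ell_i$, since the subtraction of $2^{\ell_i}$ is precisely what provides the strict inequality.
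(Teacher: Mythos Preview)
Your proof is correct and follows essentially the same approach as the paper. The only organizational difference is that the paper introduces a single pivotal index $\ell''$, the largest index inserted at any date in $\{t_i,\ldots,T-1\}$: this simultaneously witnesses that rule~II fires at date $T$ (since $f_T(\ell'')$ is defined and $<c$) and that no entry above $\ell''\leq\ell$ changed, collapsing your inductive Fact~(i) and your value-based Fact~(ii) into one observation. Your version is a bit more laborious but yields the slightly sharper bound $\bin(f_{T+1})\leq i-2^{\ell_i}$.
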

\begin{proof}
Let $\ell'$ be the index inserted at date $t_i$. Let $\ell''$ be the largest index inserted at a date in $t_i,\dots,T-1$. 
By definition of $\ell''$, we have that $f_{T}(\ell'')$ is defined and by definition of $T$ and $c$ we have that $f_{T}(\ell'')<c$.
Thus, we can perform a type 2 insertion of $c$ at $\ell''$ and hence $\ell$ is well-defined.
Thus, $f_{T+1}(i)$ is odd or undefined for $i\leq \ell$, and $\dome(f_{T+1})\cap \ell+1\dots k=\dome(f_{t_i+1})\cap \ell+1\dots k$ because no such entry has changed between those two dates.
On the other hand $f_{t_i+1}(\ell')$ is even since a rule 1 update was used on that index on that date.
\end{proof}

Let $T'\in \{T,\dots, t_{i+1}-1\}$ be the first date such that $\bin(f_{T'+1})\geq i$. 
This is well-defined because we have that $\bin(f_{t_{i+1}})=i$ by Lemma~\ref{lem:rule1} (since we use rule 1 on date $t_{i+1}$). Clearly $T'>T$ since $\bin(f_{T+1})<i$ by Claim~\ref{cla:ell}. This also implies that $\bin(f_{T'})<i$.  We must thus make an update on date $T'$.
We cannot make an update of type 1 on date $T'$, because $\bin(f_{T'})<i\leq \bin(f_{T'+1})$ would then imply that $\bin(f_{T'+1})=i$ by Lemma~\ref{lem:rule1}, which contradicts the choice of $t_i$ (since $t_i<T<T'<t_{i+1}$ as noted).
We next argue that the update on date $T'$ cannot be of type 2 either which contradicts that an update have either type 1 or 2, shows that $c$ must be even and thus completes the proof of the lemma.

\begin{claim}\label{cla:rule1T'}
The update on date $T'$ is not of type 2
\end{claim}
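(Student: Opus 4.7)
The plan is to argue by contradiction, assuming the update at date $T'$ is of type~II at some index $\ell'$ with priority $c_{T'}$. My strategy is to apply Lemma~\ref{lem:rule2} at $T'$, then track the entry $f(\ell')$ from a convenient earlier date all the way to $T'$ and show that it never falls below $c$, which is incompatible with $c_{T'}\le c$ and the rule~II condition at $T'$.

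First, Lemma~\ref{lem:rule2} provides a date $t^{*}<T'$ with a type~I update at index $\ell'$, the last such date before $T'$, together with the guarantee that no insertion is performed at any index strictly greater than $\ell'$ during $[t^{*}+1,T'-1]$, and $\bin(f_{t^{*}+1})\ge\bin(f_{T'+1})\ge i$. Since $\bin(f_{s+1})<i$ for every $s\in[T,T'-1]$ by minimality of $T'$ together with Claim~\ref{cla:ell}, I deduce that $t^{*}<T$, and therefore $T\in[t^{*}+1,T'-1]$, so the insertion index $\ell$ at date $T$ satisfies $\ell\le\ell'$.

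I then show $f_{T+1}(\ell')\ge c$. When $\ell=\ell'$ this is by definition of insertion. When $\ell<\ell'$, the fact that rule~II at date $T$ selected $\ell$ as the \emph{highest} index with $f_T(\ell)<c$ forces $f_T(\ell')\ge c$, and the insertion at $\ell$ preserves entries at indices $>\ell$, so $f_{T+1}(\ell')\ge c$. I propagate this inequality throughout $[T+1,T'-1]$: every priority in this range is $\le c$ by the definition of $c$, so a rule~II insertion at $\ell'$ would require a priority strictly above $f(\ell')\ge c$ and is impossible; a rule~I insertion at $\ell'$ is ruled out by the maximality of $t^{*}$; insertions at indices $<\ell'$ do not affect $f(\ell')$; and insertions at indices $>\ell'$ are forbidden by Lemma~\ref{lem:rule2}. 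Consequently $f_{T'}(\ell')=f_{T+1}(\ell')\ge c$.

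The contradiction follows at once: rule~II at date $T'$ demands $f_{T'}(\ell')<c_{T'}$ (the possible rule~I step at $T'$ only acts at indices strictly below $\ell'$, so it leaves $f(\ell')$ untouched), but $c_{T'}\le c\le f_{T'}(\ell')$. The main obstacle in this plan is the clean propagation of $f(\ell')\ge c$ throughout $[T+1,T'-1]$, which rests on simultaneously excluding rule~I and rule~II insertions at $\ell'$ in that range; once that is in place, the displayed inequality finishes the argument.
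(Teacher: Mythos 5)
Your proof is correct and leans on the same key tool as the paper --- Lemma~\ref{lem:rule2} applied to the hypothetical type~II update at $T'$ --- but you arrange the contradiction differently. The paper aims to show that the type~I date $t^{*}$ supplied by Lemma~\ref{lem:rule2} (the paper's $t$) satisfies $t^{*}\ge T$, so that $\bin(f_{t^{*}+1})=\bin(f_{T'+1})\ge i$ contradicts the minimality of $T'$; to get there it first checks that $c_{T'}$ is even and then splits on whether the largest index $\ell''$ touched in $[T,T']$ exceeds $\ell'$ or equals it, handling the latter case by locating the first date at which $f(\ell')$ becomes defined and drops below $c$. You take the contrapositive first: minimality of $T'$ forces $t^{*}<T$, and for this you only need $\bin(f_{t^{*}+1})\ge\bin(f_{T'+1})$, so unlike the paper you never need the parity of $c_{T'}$. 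Moreover $t^{*}<T$ combined with item~3 of Lemma~\ref{lem:rule2} makes the paper's first case vacuous, since every insertion in $[T,T'-1]$ must then land at an index $\le\ell'$; what remains is a forward propagation of $f(\ell')\ge c$ up to date $T'$, the mirror image of the paper's backward search, using the same three exclusions. This buys a cleaner structure with one fewer case. One point you should make explicit: when $\ell<\ell'$ you assert $f_T(\ell')\ge c$, which presupposes that $f_T(\ell')$ is defined. This does hold --- $\ell'$ enters the domain at the type~I date $t^{*}<T$ and can only leave it via an insertion at an index $\ge\ell'$, which your own setup excludes on $[t^{*}+1,T'-1]$ --- but it deserves a sentence; alternatively, if $f_{T'}(\ell')$ were undefined the rule~II condition at $T'$ would fail just as immediately. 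With that sentence added the argument is complete.
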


\begin{proof}
Assuming towards contradiction that rule 2 is used on date $T'$.
Let $(\ell',c')$ be the update performed on date $T'$. Since
$\bin(f_{T'})<i\leq \bin(f_{T'+1})$, we have that $c'$ is even. 
We will argue that there can be no such $\ell'$. 
Let $\ell''$ be the largest index changed between date $T$ and date $T'$, both included. We thus have that $\ell''\geq \ell,\ell'$.
 We can apply Lemma~\ref{lem:rule2} and see that there is $t$ such that $t<T'$ and such that $\bin(f_{j+1})=\bin(f_{T'+1})$, because $c'$ is even.
 We thus just need to argue that $t\geq T$ to contradict that $T'$ is the first date in $\{T,\dots, t_{i+1}-1\}$ where $\bin(f_{T'+1})\geq i$.

\smallskip\noindent{\bf If $\bm{\ell''>\ell'}$.} Note that this is especially the case if $\ell>\ell'$. We see that $t\geq T$ because there is no insertion between date $t$ and $T'$ at a higher index than $\ell'$ by Lemma~\ref{lem:rule2}. This contradicts the choice of $T'$.

\smallskip\noindent{\bf Otherwise, if $\bm{\ell''=\ell'\geq \ell}$.} In this case $f_{T+1}(\ell')$ is either not defined or at least $c$. This is because if $f_{T}(\ell')$ was defined and smaller than $c$, then it would be changed on date $T$. 
We have that $f_{T'}(\ell')$ is defined and $<c'\leq c$ because otherwise we could not use rule 2 on date $T'$ and insert into $\ell'$.
Consider the first date $t'\geq T$  such that $f_{t+1}(\ell')$ is defined and $<c$.  Hence $t'\leq T'$. 
To lower an entry or make it defined we must use rule 1 on that entry and thus, we use rule 1 on date $t'$ on entry $\ell'$. 
Hence $t'\neq T'$ (because we use rule 2 on date $T'$) and thus $t'<T'$.
But then $t\geq t'\geq T$ because $t<T'$ is the last date on which rule 1 was used on index $\ell'$ by Lemma~\ref{lem:rule2}. This contradicts the choice of $T'$.
\end{proof}
\end{proof}

\section{Time complexity of solving statistics games}
\paragraph*{Reachability games}
A reachability game $G$ is a tuple $(V,E,\top)$, where $V$ is a set of $n$ vertices and $E\subseteq V\times V$ is a set of $m$ edges. The vertex $\top\in V$ is a the target vertex.
The play starts in some initial vertex $s$, player 1 and 2 alternatively select a vertex $u\in \{u\mid (v,u)\in E\}$. The play then continues to $u$. If the play is ever in $w$, the game ends and player 1 wins, otherwise player 2 wins.

If player $1$ has a strategy to ensure a win from some vertex $s$,
then $s$ is called a winning vertex.
The classical algorithm for reachability games $G$ is called backward induction and computes in time $O(m)$ the set of winning vertices.

\paragraph*{Statistics game as a reachability game}

Given a parity game $G=(V,E)$, with $M$ priorities, $n$ vertices and $m$ edges, let $k=\lceil\log (n+1)\rceil$ be the maximum index in the corresponding statistics game.
Denote $S_{i,M}$ the set of statistics with $M$ priorities and $i$ being the highest possible index. 

The corresponding statistics game is the
reachability game with vertices $ V\times S_{k-1,M}\cup \{\top\}$.
For every edge $(v,u)\in E$ and statistic update $f\rightarrow_{\val(u)}f'$ with $k\not\in \dom(f)$,
 there is an edge from $(v,f)$ to $(u,f')$ if $k\not\in \dom(f')$
or to $w$ if $k\in \dom(f')$.

\paragraph*{A naïve upper complexity bound}

According to Theorem~\ref{theo:main},
a vertex $s$ is winning in the parity game
if and only if the vertex $(s,\emptyset)$ is winning in the statistics game.

The statistics game has $\leq n|S_{k-1,M}|+1$ vertices and $\leq m|S_{k-1,M}|$ edges
and there is a naïve $(M+1)^{\lceil\log (n+1)\rceil}$ upper bound on $|S_{k-1,M}|$.
This gives a first straightforward upper bound on the complexity of solving parity games:
\[
\mathcal{O}\left(m (M+1)^{\lceil\log (n+1)\rceil}\right)
\leq
\mathcal{O}\left(m M^{1 + \log (n)}\right)
 \enspace.
\]

\paragraph*{Tighter upper complexity bounds}

We give tighter upper complexity bounds,
starting with some bounds on  $|S_{i,M}|$ for all $i,M$.

\begin{lemma}\label{lem:inc_func}
Let $x,y\in \ZZ$.
The number of increasing functions $f:\{1,\dots,x\}\rightarrow \{1,\dots,y\}$ is ${{x+y-1}\choose{x}}$
\end{lemma}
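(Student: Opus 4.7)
The plan is to prove this classical stars-and-bars identity via a shift bijection, reducing the count of weakly increasing functions to the count of strictly increasing functions (i.e.\ $x$-element subsets of an $(x{+}y{-}1)$-set).

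First I would explicitly describe the map $\Phi$ sending a weakly increasing $f:\{1,\dots,x\}\to\{1,\dots,y\}$ to the function $g:\{1,\dots,x\}\to\{1,\dots,x+y-1\}$ defined by $g(j)=f(j)+(j-1)$. The key verification is twofold: (i) if $j<j'$ then $f(j)\leq f(j')$ implies $g(j)=f(j)+(j-1)<f(j')+(j'-1)=g(j')$, so $g$ is strictly increasing; and (ii) the codomain bound $g(j)\leq y+(x-1)=x+y-1$ holds while $g(j)\geq 1+0=1$.

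Next I would exhibit the inverse $\Psi$: given any strictly increasing $g:\{1,\dots,x\}\to\{1,\dots,x+y-1\}$, set $f(j)=g(j)-(j-1)$. Strict increase of $g$ gives $g(j')\geq g(j)+(j'-j)$ for $j<j'$, so $f(j')\geq f(j)$, and the bounds $1\leq f(j)\leq y$ follow from $g(j)\geq j$ (forced by strict increase starting from values $\geq 1$) and $g(j)\leq (x+y-1)-(x-j)=y+j-1$. Since $\Phi$ and $\Psi$ are clearly mutual inverses, they are bijections.

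Finally, the number of strictly increasing functions from $\{1,\dots,x\}$ to $\{1,\dots,x+y-1\}$ is simply the number of $x$-element subsets of $\{1,\dots,x+y-1\}$ (since a strictly increasing function is determined by and determines its image), which equals $\binom{x+y-1}{x}$. I do not foresee a real obstacle here; the only thing to be careful about is keeping the codomain-bound arithmetic straight on both sides of the bijection, since an off-by-one would change the binomial coefficient.
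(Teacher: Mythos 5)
Your proposal is correct and is essentially the same argument as the paper's: the paper also uses the shift map $f \mapsto \{f(1), f(2)+1, \dots, f(x)+x-1\}$, identifies its image with $x$-element subsets of $\{1,\dots,x+y-1\}$, and exhibits the inverse $j_z \mapsto j_z - z + 1$. You simply spell out the codomain bounds more explicitly than the paper does.
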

\begin{proof}
Each increasing function $f:\{1,\dots,x\}\rightarrow \{1,\dots,y\}$ has a 1-to-1 correspondence with subsets of size $x$ of $\{1,\dots,x+y-1\}$ as follows: Let $S_f$ be the set $\{f(1),f(2)+1,\dots,f(x)+x-1\}$. Observe that since $f$ is increasing, $f(i)+i<f(i+1)+i+1$ for all $i$. Thus $S_f$ has exactly $x$ elements. On the other hand, every set $S=\{1\leq j_0<\dots< j_x\}$ corresponds to the function $f_S(z)=j_z-z+1$. The function $f_S$ is increasing because $j_i>j_{i-1}$ for all $i$. There are ${{x+y-1}\choose{x}}$ subsets of size $x$ of $\{1,\dots,x+y-1\}$.
\end{proof}

\begin{lemma}\label{lem:partial_inc_func}
Let $x,y\in \ZZ$.
The number of increasing, partial functions $f:\{0,\dots,x\}\rightarrow \{1,\dots,y\}$ is $\sum_{i=0}^{x+1}{{x+1}\choose{i}}\cdot {{i+y-1}\choose{i}}$
\end{lemma}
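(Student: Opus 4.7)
The plan is to decompose the counting problem by conditioning on the size $i$ of the domain of the partial function. Since the source set $\{0,\dots,x\}$ has $x+1$ elements, a partial function is uniquely determined by (a) the choice of its domain $D\subseteq\{0,\dots,x\}$, together with (b) an increasing total function from $D$ into $\{1,\dots,y\}$.

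For a fixed size $i\in\{0,\dots,x+1\}$, there are $\binom{x+1}{i}$ ways to pick the domain $D$ of that size. Having picked $D$, the restriction is an increasing function from a totally ordered $i$-element set into $\{1,\dots,y\}$, and by Lemma~\ref{lem:inc_func} the number of such functions is $\binom{i+y-1}{i}$ (the values only depend on the ordered positions $1,\dots,i$, not on the specific elements of $D$). Multiplying and summing over $i$ yields exactly
\[
\sum_{i=0}^{x+1}\binom{x+1}{i}\binom{i+y-1}{i},
\]
as claimed.

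There is no substantial obstacle: the only thing to be mildly careful about is the index range (the empty partial function corresponds to $i=0$, and a total function corresponds to $i=x+1$), and the fact that Lemma~\ref{lem:inc_func} is stated for the source set $\{1,\dots,x\}$ but applies to any $i$-element totally ordered set since only the order type matters.
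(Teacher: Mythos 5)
Your proposal is correct and follows exactly the paper's own argument: condition on the domain size $i$, count the $\binom{x+1}{i}$ possible domains, and apply Lemma~\ref{lem:inc_func} after identifying each $i$-element domain with $\{1,\dots,i\}$ by order type. Your extra remarks about the boundary cases $i=0$ and $i=x+1$ are a minor (and harmless) addition to what the paper says.
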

\begin{proof}
A partial increasing function is a increasing function in its domain. 
For a fixed $i$, there are ${{x+1}\choose{i}}$ domains of size $i$. 
Since each domain of size $i$ corresponds to the domain $\{1,\dots,i\}$ we can apply Lemma~\ref{lem:inc_func} and see that 
there are ${{i+y-1}\choose{i}}$ increasing functions for a fixed domain of size $i$.
Thus, there are  $
\sum_{i=0}^{x+1}{{x+1}\choose{i}}\cdot {{i+y-1}\choose{i}}$
increasing partial functions $f:\{0,\dots,x\}\rightarrow \{1,\dots,y\}$ in total.
\end{proof}

Hence, the time complexity of backwards induction on the statistics game is \[
O(m|S_{k-1,M}|)=O\left(m\sum_{i=0}^{k}{{k}\choose{i}}\cdot {{i+M-1}\choose{i}}\right) \enspace .
\]

\begin{theorem}
Given a parity game with $n$ vertices, $m$ actions and max priority $M$, the winner of each initial vertex can be found in time \[O\left(\min\left(mn^22^M/\sqrt{M\log n},mn^{2.4427...}n^{\log (1+\frac{M}{\log n})}\cdot \left(1+\frac{M}{\log n}\right)\right)\right)\enspace .\] 

Especially, for $M\geq \epsilon\log^2 n$, for some constant $\epsilon>0$, the winner can be found in $O(m\cdot n^{1.4427...}n^{\log (1+\frac{M}{\log n})}\cdot (1+\frac{M}{\log n}))$ time.

For $M=\log n$, the winner can be found in time $O\left(mn^{\log\frac{\sqrt 2+1}{\sqrt 2-1}}\right)= O(mn^{2.5431...})$.
\end{theorem}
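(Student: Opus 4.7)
The plan is to combine the reduction to a reachability game from the previous subsection---which by backward induction yields running time $O(m\cdot|S_{k-1,M}|)$---with two complementary upper bounds on $|S_{k-1,M}|$. From Lemma~\ref{lem:partial_inc_func} with $k=\lceil\log(n+1)\rceil$, $|S_{k-1,M}|=\sum_{i=0}^k\binom{k}{i}\binom{M+i-1}{i}$. Bounding this sum in two different ways will give the two arguments of the outer min; the two specific regimes in the last sentences of the theorem will then follow by sharpening the weaker of the two bounds in each regime.

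For the first argument $n^2 2^M/\sqrt{M\log n}$ I will estimate each term via the central-binomial inequality $\binom{M+i-1}{i}\leq\binom{M+i-1}{\lfloor(M+i-1)/2\rfloor}=O(2^{M+i}/\sqrt{M+i})$. Pulling out $2^M/\sqrt{M}$ and using $\sum_i\binom{k}{i}2^i=3^k$ yields $|S_{k-1,M}|=O(2^M\cdot 3^k/\sqrt M)$. Since $k\leq\log n+1$ gives $3^k=O(n^{\log 3})$, and the elementary inequality $n^{\log 3}\leq n^2/\sqrt{\log n}$ holds for $n\geq 2$ (because $n^{2-\log 3}=n^{0.415\ldots}\geq\sqrt{\log n}$), this converts to the stated bound.

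For the second argument I will use the monotonicity $\binom{M+i-1}{M-1}\leq\binom{M+k-1}{M-1}$ for $i\leq k$, giving $|S_{k-1,M}|\leq 2^k\binom{M+k-1}{k}$, and then the standard estimate $\binom{N}{k}\leq(eN/k)^k$ to obtain $|S_{k-1,M}|\leq(2e)^k(1+M/k)^k$. Since $k=\Theta(\log n)$, $(2e)^k=O(n^{\log(2e)})$ and $(1+M/k)^k\leq(1+M/\log n)\cdot n^{\log(1+M/\log n)}$, giving $O(n^{2.4427\ldots}\cdot(1+M/\log n)\cdot n^{\log(1+M/\log n)})$ as required.

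For $M\geq\epsilon\log^2 n$ with $\epsilon$ a sufficiently large constant, I will look at consecutive term ratios $T_{i+1}/T_i=(k-i)(M+i)/(i+1)^2$ of the sum: their minimum over $\{0,\ldots,k-1\}$ is attained at $i=k-1$ and equals $(M+k-1)/k^2\geq 2$ in this regime, so by a geometric-series argument the whole sum is at most $2\binom{M+k-1}{k}$; this eliminates the $2^k$ factor of the second bound and reduces the exponent from $\log(2e)$ to $\log e=1.4427\ldots$. For $M=\log n$ both coarse bounds above are too weak, so I will instead start from $|S_{k-1,k}|=[z^0](1+z)^k(1-1/z)^{-k}$ and extract the constant term as a residue at $z=1$, rewriting the sum as $\sum_{\ell=0}^{k-1}\binom{k-1}{\ell}\binom{k}{\ell}2^{k-\ell}$; a saddle-point analysis at $\ell=(\sqrt 2-1)k$ then shows that it grows like $(3+2\sqrt 2)^k/\Theta(\sqrt k)$, and the identity $3+2\sqrt 2=(\sqrt 2+1)/(\sqrt 2-1)$ yields $O(mn^{2.5431\ldots})$. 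This last case is the principal obstacle: unlike the two main bounds, it requires nontrivial asymptotic analysis of the sum rather than elementary binomial inequalities.
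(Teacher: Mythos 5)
Your overall plan coincides with the paper's: reduce to backward induction on the reachability game, so the cost is $O(m|S_{k-1,M}|)$ with $|S_{k-1,M}|=\sum_{i=0}^{k}\binom{k}{i}\binom{M+i-1}{i}$, and then bound this binomial sum in several regimes. The tactics differ in places. For the first argument of the min you use the central-binomial bound termwise and the identity $\sum_i\binom{k}{i}2^i=3^k$, where the paper more crudely pulls out $2^k$ and sums the second factors; your version is cleaner and still lands inside $n^2 2^M/\sqrt{M\log n}$ since $n^{\log 3}\sqrt{\log n}\leq n^2$. Your second bound ($2^k\binom{M+k-1}{k}\leq(2e)^k(1+M/k)^k$) is essentially the paper's $g(k)\leq(k+M-1)^k/k!$ estimate. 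The largest divergence is the $M=\log n$ case: the paper locates the maximizing index $i_*=k/\sqrt2$ of $g(i)=\binom{k}{i}\binom{M+i-1}{i}$ from the ratio $g(i)/g(i-1)=(k-i)(i-1+M)/i^2$ and evaluates $k\,g(i_*)$ by Stirling to get $\bigl(\tfrac{1+1/\sqrt2}{1-1/\sqrt2}\bigr)^k$; you instead transform the sum via a generating-function/residue identity into $\sum_\ell\binom{k-1}{\ell}\binom{k}{\ell}2^{k-\ell}$ (which checks out for small $k$; it is a Delannoy-type identity) and run a saddle point at $\ell=(\sqrt2-1)k$. Both give $(3+2\sqrt2)^k$ and hence $n^{2.5431\ldots}$; the paper's route avoids the identity and is arguably more elementary, while yours isolates the asymptotics in a standard saddle-point computation.

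Two caveats. First, for the regime $M\geq\epsilon\log^2 n$ you require $\epsilon$ \emph{sufficiently large} so that the minimal term ratio $(M+k-1)/k^2$ is at least $2$; the paper's argument works for an arbitrary constant $\epsilon>0$ by observing that only the top $O(1/\epsilon)$ terms can fail to decay geometrically and each is within an $\epsilon$-dependent constant of $g(k)$. Under the existential reading of ``for some constant $\epsilon>0$'' your version suffices, but it proves a strictly weaker statement than the paper's. Second, your step $(1+M/k)^k\leq(1+M/\log n)\,n^{\log(1+M/\log n)}$ silently discards the fact that $k$ can exceed $\log n$ by more than $1$, leaving an extra bounded power of $(1+M/\log n)$; the paper commits the same rounding slop, so this is not a substantive gap, but it deserves a sentence.
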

\begin{proof}
We will give an upper bound on $O\left(m\sum_{i=0}^{k}{{k}\choose{i}}\cdot {{i+M-1}\choose{i}}\right)$.

Let $g(i)={{k}\choose{i}}\cdot {{i+M-1}\choose{i}}$.
For $i=k$ we have that 
\begin{align*}
g(k)&=\frac{(k-1+M)!}{k!(M-1)!}=\frac{k-1+M}{k}\cdot {{k-2+M}\choose{k-1}}=\frac{k-1+M}{k^2}\cdot g(k-1)
\end{align*}

For $0<i<k$ we have that
\begin{align*}
g(i)=
\frac{k!}{i!(k-i)!}\cdot 
\frac{(i-1+M)!}{i!(M-1)!}=
\frac{(k-i)(i-1+M)}{i^2}\cdot g(i-1) \enspace .
\end{align*}

Observe that $2^k=2^{\lceil{\log (n+1)}\rceil}<2^{\log (n+1)+1}=2(n+1)$.

A trivial bound on ${{y}\choose{x}}$ for all $x,y$ is $y^x/x!$. We thus get  using Stirling's approximation  that
\begin{align*}
g(k)&\leq (k+M-1)^{k}/k!
< 1/2 e^{k \ln (k+M-1)}/e^{(k+1/2)\ln k-k}\\
&= 1/2 e^{k \ln (k+M-1)-(k+1/2)\ln k+k}
= 1/2 e^{k (1-\ln k+\ln (k+M-1))-(\ln k)/2}\\
&=1/2 e^{k \ln ((e(k+M-1))/k)-(\ln k)/2}=k^{-1/2}\cdot (e(k+M-1)/k)^k\\
&=k^{-1/2}\cdot (2(n+1))^{\log e+\log (1+(M-1)/k)}=O(k^{-1/2}\cdot n^{1.4427...}n^{\log (1+\frac{M-1}{\log n})}\cdot (1+\frac{M-1}{\log n}))
\end{align*}

We first consider the case where $M\geq \epsilon k^2$ for some constant $\epsilon>0$.
Observe that $g(k)$ is a factor $\epsilon$ of $g(k-1)$ for this choice of $M$.
Also, for $0<i<k$ we have that $g(i)/g(i-1)>(k-i)\epsilon$.
Thus, $g(i)$ is decreasing geometrically (with a constant factor of at most $1/\epsilon$) for $k-1/\epsilon>i$ and increasing below that. But, $1/\epsilon$ is a constant and thus, $\sum_{i=0}^k g(i)$ is $O(g(k))=O({{k-1+M}\choose{k}})=O(k^{-1/2}\cdot n^{\log e+\log (1+(M-1)/k)}\cdot (1+\frac{M-1}{\log n}))$. Hence, the time complexity is $O(m\log^{-1/2} n\cdot n^{1.4427...}n^{\log (1+\frac{M-1}{\log n})}\cdot (1+\frac{M-1}{\log n}))$ in this case.

Next we consider smaller values of $M\geq k+1$.
Note that ${{y}\choose{x}}$ is geometrically increasing for a fixed $y$ for $x<y/2$ and geometrically decreasing for $x>y/2$. Also, ${{y}\choose{y/2}}\approx 2^y/\sqrt{y}$.

Thus, the time complexity is \begin{align*}
O(m|S_{k-1,M}|)&=O\left(m\sum_{i=0}^{k}{{k}\choose{i}}\cdot {{i-1+M}\choose{i}}\right)=O\left(m\cdot 2^k\sum_{i=0}^{k}{{i-1+M}\choose{i}}\right)\\
&=O\left(m\cdot 2^k/\sqrt{k}{{k-1+M}\choose{\min(\frac{k-1+M}{2},k)}}\right)\\
&=O\left(\min\left(mn^22^M/\sqrt{M\log n},mn^{2.4427...}n^{\log (1+\frac{M-1}{\log n})}\cdot (1+\frac{M-1}{\log n})\right)\right)
 \enspace .
\end{align*}

Note that the above argument basically finds the maximum of ${{k}\choose{i}}$ and ${{i-1+M}\choose{i}}$ independently, i.e. without using that it is the same $i$. Thus, one can give better bounds for especially specific values of $M$ as a function of $k$.
We see that $g(i)$ keeps increasing until $g(i)/g(i-1)\leq 1$. Let $i_*$ be the smallest such $i$.
\begin{align*}
&1\geq g(i_*)/g(i_*-1)\Leftrightarrow 1\geq \frac{(k-i_*)(i_*+M-1)}{i_*^2}\\
\Leftrightarrow &i_*^2 \geq (k-i_*)(i_*+M-1)\Leftrightarrow i_*^2 \geq k(M-1) +ki_*-(M-1)i_* -i_*^2\\
\Leftrightarrow &0 \geq k(M-1) +ki_*-(M-1)i_* - 2i_*^2\Leftrightarrow 0 \leq -k(M-1) -ki_*+(M-1)i_* + 2i_*^2\\
\Leftrightarrow & i_* \geq \frac{k-(M-1)+ \sqrt{(M-k-1)^2+8k(M-1)}}{4}\enspace .
\end{align*}

We then get an upper bound on $O(m|S_{k-1,M}|)$ of \[
O(m\cdot \sum_{i=0}^k g(i))=O(m\cdot k g(i_*)) \enspace .
\]
This bound is accurate upto a factor of $k=O(\log n)$.

Thus, for instance, for $M=k+1$, we have that $i_*= \frac{k}{\sqrt{2}}$. Inserting this into $O(m\cdot k g(i_*))$ we get that \begin{align*}
m\cdot k g(i_*)&=m\cdot k g(i_*)=m\cdot k \cdot {{k}\choose{i_*}}\cdot {{k+i_*}\choose{i_*}}\\
&=m\cdot k \cdot \frac{k!}{i_*!(k-i_*)!}\frac{(k+i_*)!}{i_*!k!}=
m\cdot k \cdot \frac{(k+i_*)!}{(i_*!)^2(k-i_*)!}\\
&< m\cdot k \cdot e^{(k+i_*+1/2)\ln(k+i_*)-(k+i_*)-2((i_*+1/2)\ln(i_*)-i_*)-((k-i_*+1/2)\ln(k-i_*)-(k-i_*))}\enspace ,
\end{align*}
where we used Stirling's approximation for the inequality.
We next consider the exponent of $e$. \begin{align*}
&(k+i_*+1/2)\ln(k+i_*)-(k+i_*)-2((i_*+1/2)\ln(i_*)-i_*)\\&-((k-i_*+1/2)\ln(k-i_*)-(k-i_*))\\
=&i_*(\ln(k+i_*)+\ln(k-i_*)-2\ln(i_*))
+k(\ln(k+i_*)-\ln(k-i_*))\\
&+1/2(\ln(k+i_*)-2\ln(i_*)-\ln(k-i_*))\\
=&i_*\left(\ln\left(\frac{(k+i_*)(k-i_*)}{i_*^2}\right)\right)
+k\left(\ln\left(\frac{k+i_*}{k-i_*}\right)\right)
+1/2\left(\ln\left(\frac{k+i_*}{(i_*^2)(k-i_*)}\right)\right)\\
=&i_*\left(\ln\left(\frac{k^2-k^2/2}{k^2/2}\right)\right)
+k\left(\ln\left(\frac{k+k/\sqrt 2}{k-k/\sqrt 2}\right)\right)
+1/2\left(\ln\left(\frac{k+k/\sqrt 2}{k^2/2(k-k/\sqrt 2)}\right)\right)\\
=&k\left(\ln\left(\frac{1+1/\sqrt 2}{1-1/\sqrt 2}\right)\right)
+1/2\left(\ln\left(\frac{1+1/\sqrt{2}}{k^2(1-1/\sqrt{2}}\right)\right)\\
\end{align*}
Inserting it back into the earlier expression, we get that the time complexity is
\begin{align*}
O\left(m\cdot k \cdot e^{k(\ln(\frac{1+1/\sqrt 2}{1-1/\sqrt 2}))
+1/2(\ln(\frac{1+1/\sqrt{2}}{k^2(1-1/\sqrt{2})}))}\right)
&=O\left(m\left(\frac{1+1/\sqrt 2}{1-1/\sqrt 2}\right)^k\right)\\
&=O\left(mn^{\log\frac{\sqrt 2+1}{\sqrt 2-1}}\right)= O(mn^{2.5431...})
\end{align*}

\end{proof}

\end{document}